\newtheorem{theorem}{Theorem}
\newtheorem{def1}{Definition}
\title{Exact Optimization for Minimum Dominating Sets}
\author{
    %Authors
    % All authors must be in the same font size and format.
    Enqiang Zhu\textsuperscript{\rm 1},
    Qiqi Bao \textsuperscript{\rm 1},
    Yu Zhang \textsuperscript{\rm 2},
    Pu Wu \textsuperscript{\rm 3}\thanks{Corresponding authors},
    Chanjuan Liu \textsuperscript{\rm 4}\thanks{Corresponding authors}
}
\title{My Publication Title --- Single Author}
\author {
    Author Name
}
\title{My Publication Title --- Multiple Authors}
\author {
    % Authors
    First Author Name\textsuperscript{\rm 1},
    Second Author Name\textsuperscript{\rm 2},
    Third Author Name\textsuperscript{\rm 1}
}
\begin{document}

\maketitle

\begin{abstract}
The Minimum Dominating Set (MDS) problem is a well-established combinatorial optimization problem with numerous real-world applications. Its \text{NP}-hard nature makes it increasingly difficult to obtain exact solutions as the graph size grows. This paper introduces ParDS, an exact algorithm developed to address the MDS problem within the branch-and-bound framework. ParDS features two key innovations: an advanced linear programming technique that yields tighter lower bounds and a set of novel reduction rules that dynamically simplify instances throughout the solving process. Compared to the leading exact algorithms presented at IJCAI 2023 and 2024, ParDS demonstrates theoretically superior lower-bound quality. Experimental results on standard benchmark datasets highlight several significant advantages of ParDS: it achieves fastest solving times in 70\% of graph categories, especially on large, sparse graphs, delivers a speed-up of up to 3,411 times on the fastest individual instance, and successfully solves 16 out of 43 instances that other algorithms were unable to resolve within the 5-hour time limit. 
These findings establish ParDS as a state-of-the-art solution for exactly solving the MDS problem.
\end{abstract}

\begin{comment}
The Minimum Dominating Set (MDS) problem is a well-known combinatorial optimization challenge with a wide range of real-world applications. Its \textit{NP}-hard nature makes it increasingly difficult to find exact solutions as the size of the graph increases. This paper introduces ParDS, an exact algorithm designed to tackle the MDS problem within the branch-and-bound framework. ParDS incorporates two significant innovations: 
(1) an improved linear programming technique that generates tighter lower bounds, which allows for more effective pruning during the search process; and 
(2) a set of novel reduction rules that dynamically simplify instances throughout the solving process. When compared to leading exact algorithms presented at IJCAI 2023 and 2024, ParDS exhibits superior lower-bound quality theoretically. Experiments conducted on standard benchmark datasets reveal that, when considering runtime differences of less than 0.01 seconds as negligible (to filter out non-algorithmic noise), ParDS provides several advantages: It achieves faster solving times across xxx% of graph instance categories; it produces up to xxx× speed-up on the fastest individual instance; it outperforms previous methods by xxx× on xxx instances overall; and it delivers an average improvement of xxx× across the entire benchmark set. These results establish ParDS as a state-of-the-art solution for solving the MDS problem exactly.
\end{comment}

% Uncomment the following to link to your code, datasets, an extended version or similar.
% You must keep this block between (not within) the abstract and the main body of the paper.
% \begin{links}
%     \link{Code}{https://aaai.org/example/code}
%     \link{Datasets}{https://aaai.org/example/datasets}
%     \link{Extended version}{https://aaai.org/example/extended-version}
% \end{links}

\section{Introduction}
The Minimum Dominating Set (MDS) problem is a well-established combinatorial optimization challenge aiming to identify the smallest subset of vertices in a graph that can dominate all other vertices. This problem arises naturally in various real-world applications, such as biological network analysis \cite{wuchty2014controllability}, epidemic control \cite{zhao2020minimum}, and influence analysis within social networks \cite{dinh2014approximability}. Despite its practical importance, the MDS problem is categorized as \text{NP}-hard \cite{garey1979computers}. It has been shown that for a graph with $n$ vertices, the MDS problem cannot be approximated to a factor better than $\Omega(\ln{n})$, assuming that P $\neq$ NP holds \cite{gast2015inapproximability}. Consequently, finding an optimal solution for general graphs is computationally impractical, particularly when dealing with large graphs.

%In the context of approximation algorithms, it has been proven that for a graph with $n$ vertices, the MDS problem cannot be approximated to a factor better than $\Omega(\ln{n})$, assuming $P \neq NP$ holds \cite{gast2015inapproximability}.

Various heuristic strategies have been developed to tackle the MDS problem in large-scale instances \cite{lei2020solving,cai2021two,zhu2024dual}. The most effective heuristic approaches typically employ data reduction techniques to preprocess and simplify the problem, generate an initial solution using greedy methods, and refine the solution through heuristic methods to enhance its quality. While heuristic algorithms can effectively address large instances, they do not guarantee optimal solutions.

\begin{comment}
For example, Fan et al. \cite{fan2019efficient} introduced ScBppw for MDS, a method specifically designed for large graphs that combines score checking (SC) with a probabilistic walk to improve solution quality. Similarly, Cai et al. \cite{cai2021two} proposed a two-goal local search framework that developed the FastDS algorithm for MDS, which leverages three data reduction rules. More recently, Zhu et al. \cite{zhu2024dual} revised the single vertex swap strategy used in previous algorithms, introducing DmDS to address the MDS problem. This approach incorporates two distinct vertex swap strategies, significantly enhancing the diversity of the solutions.
\end{comment}

In contrast, exact algorithms guarantee optimal solutions, though they often demand exponential computational time. The study of exact algorithms offers several advantages, such as the necessity for precise solutions in certain applications, the limitations of approximation algorithms, and a deeper understanding of \text{NP}-complete problems \cite{fomin2009measure,DBLP:conf/aaai/JiangLM17,zhu2022exact,DBLP:conf/ijcai/Xiao22,DBLP:conf/aaai/FioravantesKKMO24,DBLP:conf/aaai/FioravantesGM25}. The earliest non-trivial exact algorithm for MDS was introduced by Fomin et al. \cite{fomin2004exact}, and since then, various enhancements have been proposed \cite{schiermeyer2008efficiency,van2008design,van2011exact}. Currently, the most efficient exact algorithm for MDS achieves a time complexity of \( O(1.4864^n) \) while maintaining polynomial space requirements \cite{iwata2012faster}. Recently, Jiang et al. \cite{jiang2023exact} introduced a branch-and-bound (BnB) algorithm known as EMOS, which utilizes the independence number of the 2-hop graph as a lower bound and integrates a reduction rule from \cite{alber2004polynomial}. Inza et al. \cite{inza2024exact} developed an implicit enumeration algorithm for the MDS problem. Xiong and Xiao \cite{Xiong2024exact} proposed a BnB algorithm that employs linear programming (LP) relaxations as lower bounds to prune the search space.

Despite significant advancements, accurately identifying MDSs remains a challenge, primarily due to the exponential increase in the search space as graph sizes expand. This paper seeks to develop efficient exact algorithms for the MDS problem using the BnB framework. The BnB approach initially applies reduction rules to simplify the problem instance, followed by branching on vertices in search of a solution. Additionally, a pruning technique is implemented to enhance the algorithm's efficiency when a subproblem's lower bound indicates that it cannot yield a solution better than the current best.

\vspace{0.5mm}
\noindent{\textbf{Our Contribution}} We introduce a novel concept involving a graph with its vertex set partitioned into \(k\) partitions, termed the Partition-Dominating Set (PDS). In this scenario, only the vertices in one partition must be dominated by a set of vertices drawn from each of the other partitions. Our proposed BnB algorithm for the PDS problem incorporates a new lower bound derived from an enhanced LP technique, intended to effectively prune the search space, along with innovative reduction rules aimed at minimizing instance sizes. We demonstrate that our lower bounds outperform those presented in prior studies \cite{jiang2023exact,Xiong2024exact}. Additionally, we conduct experiments to evaluate the efficacy of our proposed algorithm. When compared to the state-of-the-art BnB algorithms, our algorithm demonstrates significantly faster performance on standard datasets. 
%{\color{red}For example, EMOS solved 294 tested instances within 5 hours, while our algorithms solved 399 and 302 tested instances, respectively,
%within one second. }

%We also mark another recent
%reference [Inza et al., 2024] that gave an implicit enumeration
%algorithm for MDSP. However, we could not reproduce their
%result in the experiments and then we could not compare with
%this algorithm. 

\vspace{0.5mm}
\noindent{\textbf{Paper Organization}} The remainder of the paper is organized as follows: Section 2 provides the necessary notations and terminologies. Section 3 delves into a detailed description of the PDS problem. In Section 4, we introduce our innovative lower bound technique, and Section 5 outlines the associated reduction rules. Section 6 presents experimental results, and finally, Section 7 offers concluding remarks.

\section{Notations and Terminologies} \label{sec2}
%\subsection{Notations}
All graphs discussed in this paper are finite, undirected, and simple. Given a graph $G$, its \emph{vertex set} and \emph{edge set} are denoted by $V(G)$ and $E(G)$, respectively. Two vertices are considered \emph{adjacent} in $G$ if they share an edge, and one is said to be a \emph{neighbor} of the other. We use $N_G(v)$ to represent the set of neighbors of $v$ in $G$, and $N_G[v] = N_G(v) \cup \{v\}$. The \emph{degree} of a vertex $v$ in $G$, denoted by $d_G(v)$, is the cardinality of $N_G(v)$, and a \textit{$k$-vertex} is a vertex of degree $k$. We use $\Delta(G)$ and $\delta(G)$ to denote the maximum degree and the minimum degree of $G$, respectively. Clearly, $d_G(v)<|V(G)|$. For convenience, we use $[k]$ to represent the set $\{1,2,\ldots,k\}$  
and specially define $[0] = \emptyset$. 
Let $S$ be a subset of vertices in $G$. We define the set $N_G(S)$ to be the set of neighbors of each vertex in $S$, excluding the vertices in $S$ itself, and $N_G[S]=N_G(S)\cup S$. When the context is clear, we may use $d(v)$, $N(v)$, $N[v]$, $N(S)$, and $N[S]$ to refer to $d_G(v)$, $N_G(v)$, $N_G[v]$, $N_G(S)$, and $N_G[S]$, respectively. We denote by $G-S$ the subgraph of $G$ by deleting vertices in $S$ and their incident edges, and refer to  $G-(V(G)\setminus S)$ as the \emph{subgraph induced} by $S$, denoted by $G[S]$. In particular, when $S$ contains only one vertex $v$, we replace $G-\{v\}$ with $G-v$.   To \emph{identify} a subset $S$ of vertices in $G$ is to replace those vertices with a single vertex $s$ that is connected to all the vertices that were adjacent in $G$ to any vertex of $S$. The vertex $s$ is referred to as the \emph{identified vertex}, and we denote the graph obtained by identifying  $S$ as $\text{iden}_G(S)$. An \emph{independent set} $I$ in $G$ is a subset of vertices such that no two vertices in $I$ are adjacent in $G$.  A vertex $v\in V(G)$ is considered to be \emph{dominated} by a subset $D\subseteq V(G)$ if $v$ is either in $D$ or adjacent to a vertex in $D$, and \emph{undominated} by $G$ otherwise.  We call $D$ a dominating set (DS) of $G$ if all vertices of $G$ are dominated by $D$, and a \emph{partial dominating set} of $G$ otherwise.  The smallest integer $k$ for which $G$ has a DS of cardinality $k$ is called the \emph{domination number} of $G$, denoted by $\gamma(G)$. An optimal DS of $G$ is a DS with cardinality $\gamma(G)$.  The minimum dominating set (MDS) problem refers to finding an optimal DS in a graph.
We call $G$ a \emph{bipartite graph} if $V(G)$ can be partitioned into two independent sets, denoted as $V_1$ and $V_2$, which are called the \emph{parts} of $G$.

\begin{comment}
\subsection{Branch-and-Bound Algorithm}

%Our objective is to design an exact algorithm for identifying an optimal dominating set of a graph $G$. 

The branch-and-bound (BnB) paradigm first applies reduction rules to simplify the instance and then branches on vertices in search of a solution, employing a pruning technique to enhance the algorithm's efficiency when a subproblem's lower bound suggests that it cannot yield a better solution than the current best. 
\end{comment}

\section{Partition-Dominating Set}
In our branching-based exact algorithm for the MDS problem, each vertex in the graph \( G \) is assigned a status: ``branched'' if it has already been considered during the branching process, and ``unbranched'' otherwise. At each branching step, a vertex is either included in the partial solution or excluded. For each vertex \( v \in V(G) \), let \( T_v \) denote the subtree of the search tree that includes all vertices branched prior to \( v \). Each root-to-leaf path in \( T_v \) represents a sequence of decisions leading to a partial DS, which consists of those vertices added to the solution before reaching \( v \). Given a partial solution \( D \), the remaining vertices in \( V(G) \setminus D \) can be classified into four categories (as noted in \cite{jiang2023exact}): Branched and dominated vertices, branched and undominated vertices, unbranched and dominated vertices, and unbranched and undominated vertices. Throughout this paper, we denote the set of undominated vertices as \( UD \), and the set of unbranched vertices as \( UB \). Our objective is to identify the smallest subset \( D' \subseteq UB \) such that \( D' \) dominates all vertices in \( UD \). The union \(D \cup D' \) then forms a DS of $G$. Let \( D^* \) represent the current best DS of \( G \). If \( |D \cup D'| \geq |D^*| \), the corresponding branch can be safely pruned. Intuitively, as \( |D'| \) increases, the likelihood of triggering this pruning condition also rises. However, determining the exact size of a minimum \( D' \) is computationally challenging. We treat this task as a new problem.

\begin{def1}Partition-dominating set (PDS). \label{pds}
Let $G$ be a graph whose vertex set is partitioned into $k$ parts, say $V_1$, $\ldots, V_k$. For each $i\in [k]$, a $V_i$-PDS  is a subset $P \subseteq (V(G)\setminus V_i)$ such that for each $j\in ([k]\setminus \{i\})$  $P\cap V_j \neq \emptyset$ and every vertex in $V_i$ is dominated by $P$. When $N(x) \not \subseteq V_i$ for each $x\in V_i$, $V_i$-PDS does exist, and the $V_i$-partition-domination number, denoted by $\gamma_{p}(G\rightarrow V_i)$, is the smallest integer $k$ for which $G$ has a $V_i$-PDS of cardinality $k$. Given an $i\in [k]$,  the PDS problem asks to determine  $\gamma_{p}(G\rightarrow V_i)$. 
\end{def1}

The PDS problem has applications in resource allocation. Consider a region represented as a graph \( G \), partitioned into \( k \) administrative districts (\( V_1, \ldots, V_k \)), where vertices represent villages and edges indicate direct connections (such as roads). In the event that a district \( V_i \) experiences a disaster (for instance, a flood), it requires deploying resource teams (such as medical, engineering, and logistics teams) from other districts with two main objectives: (1) to ensure that every village in \( V_i \) is covered; and (2) to guarantee that each unaffected district contributes at least one team. The aim is to determine the minimum number of teams needed to satisfy both conditions, thereby promoting efficient resource allocation and fostering collaboration between districts. This situation is directly linked to computing \( \gamma_p(G \to V_i) \).

\begin{theorem}
The PDS problem is an \text{NP}-hard problem.
\end{theorem}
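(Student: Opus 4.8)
The plan is to establish NP-hardness by a polynomial-time reduction from the Minimum Dominating Set problem, which is NP-hard \cite{garey1979computers}. Since PDS is introduced precisely to capture the domination task arising in the branching process, MDS is the most natural source problem, and the only substantive work is to reconcile two structural mismatches: in MDS a vertex may dominate itself and belong to the solution, whereas in a $V_i$-PDS the solution $P$ must be disjoint from $V_i$; and a $V_i$-PDS carries the side constraint $P \cap V_j \neq \emptyset$ for every $j \neq i$.

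Given an MDS instance $H$ with $V(H) = \{w_1, \ldots, w_n\}$, I would build a bipartite graph $G$ from two vertex-disjoint copies of $V(H)$: a part $V_1 = \{a_1, \ldots, a_n\}$, to be dominated, and a part $V_2 = \{b_1, \ldots, b_n\}$, with an edge $a_i b_j$ exactly when $w_i \in N_H[w_j]$. I would take $k = 2$ and consider the instance asking for $\gamma_{p}(G \rightarrow V_1)$. The first mismatch is then handled automatically: because the copies are disjoint and $G$ is bipartite, a $V_1$-PDS is forced to lie entirely in $V_2$, so ``self-domination'' in $H$ is realized by using the $V_2$-copy of a vertex. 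The second mismatch is vacuous, since with $k = 2$ the only partition constraint is $P \cap V_2 \neq \emptyset$, which holds for any nonempty $P$.

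The key step is the correspondence between $V_1$-PDSs of $G$ and dominating sets of $H$. For $S \subseteq V(H)$ write $P_S = \{b_j : w_j \in S\} \subseteq V_2$. Since $a_i$ is dominated by $P_S$ iff some $b_j \in P_S$ satisfies $w_i \in N_H[w_j]$, i.e. iff $S \cap N_H[w_i] \neq \emptyset$, the set $P_S$ dominates all of $V_1$ exactly when $S$ is a dominating set of $H$; moreover $|P_S| = |S|$ and $S \mapsto P_S$ is a bijection onto the subsets of $V_2$. Since every $a_i$ is adjacent to $b_i$ (as $w_i \in N_H[w_i]$), the existence condition $N(a_i) \not\subseteq V_1$ holds and a $V_1$-PDS exists. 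Together these give $\gamma_{p}(G \rightarrow V_1) = \gamma(H)$, so the decision question ``is $\gamma_{p}(G \rightarrow V_1) \le t$?'' is equivalent to ``does $H$ admit a dominating set of size at most $t$?''; as $G$ is constructible in time polynomial in the size of $H$, the reduction is complete.

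I expect the main obstacle to be not computational depth but making the two structural mismatches precise and verifying that the doubling construction removes both cleanly. In particular, one must confirm that no $V_1$-PDS can ``cheat'' via an edge of $G$ having no counterpart in $H$, which is guaranteed because every edge of $G$ is defined directly from the closed neighborhoods of $H$; checking this carefully is what makes the equality $\gamma_{p}(G \rightarrow V_1) = \gamma(H)$ exact rather than merely an inequality.
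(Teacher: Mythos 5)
Your proof is correct, and it takes a genuinely different route from the paper's. The paper proves this theorem by a polynomial reduction from 3-SAT: it builds a graph whose vertex set is split into $n+1$ parts, namely $V_i=\{x_i,\overline{x_i}\}$ for each variable and a clause part $V_{n+1}$, and the partition-intersection constraints do real work there---since a $V_{n+1}$-PDS must meet every $V_i$, any solution has size at least $n$, and a solution of size exactly $n$ selects exactly one literal per variable, i.e., encodes a truth assignment, so satisfiability becomes equivalent to $\gamma_{p}(G\rightarrow V_{n+1})\leq n$. You instead reduce from MDS via a bipartite ``doubling'' construction with $k=2$, where the side constraint $P\cap V_2\neq\emptyset$ is vacuous, and you establish the exact identity $\gamma_{p}(G\rightarrow V_1)=\gamma(H)$ through the size-preserving bijection $S\mapsto P_S$; your checks of the existence condition ($a_ib_i\in E(G)$ because $w_i\in N_H[w_i]$) and of both directions of the correspondence are sound, resting on the symmetry $w_i\in N_H[w_j]\Leftrightarrow w_j\in N_H[w_i]$. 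Notably, your route is essentially the one the paper itself gestures at but never writes out: at the end of Section 3 it observes that taking $UB=UD=V(G)$ gives $\gamma_{p}(G_{UB,UD}\rightarrow UD)=\gamma(G)$, and Section 4 asserts without proof that PDS remains NP-hard on bipartite graphs ``through a polynomial reduction of the MDS problem.'' As for what each approach buys: yours is more elementary, makes that unproven remark rigorous, and yields a formally stronger conclusion (hardness already for bipartite instances with $k=2$, where the partition constraints are inert, so hardness is inherited purely from domination); the paper's 3-SAT reduction instead exercises the genuinely new ingredient of the problem---the multi-part intersection constraints---showing they can encode consistency of a truth assignment, which is more instructive about what distinguishes PDS from plain domination. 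Both are valid proofs of the stated theorem.
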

\begin{proof}
It is enough to consider the decision form of the PDS problem: Given a graph $G$ whose vertex set is partitioned into $k$ parts $V_1,\ldots,V_k$ and two integers $i,n$ where $i\in [k]$ and $n>0$, the question is whether $\gamma_{p}(G\rightarrow V_i) \leq n$. We describe a polynomial reduction from 3-SAT to PDS.  Since 3-SAT is \text{NP}-hard, PDS is also \text{NP}-hard. A 3-SAT instance is a conjunctive normal form $f=C_1\land  C_2\land \ldots \land C_m$, where $C_i, i=1,2,\ldots, m$ is a disjunctive clause of three literals on $n$ variables, denoted by $x_1, x_2,\ldots, x_n$. Here, a literal can be a variable $x$ or its negation $\overline{x}$.  The 3-SAT problem asks to determine whether an arbitrary 3-SAT instance is satisfiable.

Let $f$ be a 3-SAT instance defined as above. We construct, in polynomial time, a graph $G$ whose vertex set $V(G)$ and edge set $E(G)$ are defined as follows:   $V(G) = \cup_{i=1}^{n+1}V_i$ where $V_i = \{x_i,\overline{x_i}\}$ for $i\in [n]$ and $V_{n+1} = \{C_i|i\in [m]\}$;  $E(G) = \{x_iC_j|x_i\in C_j, i\in [n], j\in [m]\}\cup \{\overline{x_i}C_j|\overline{x_i}\in C_j, i\in [n], j\in [m]\}$.  We now show that $\gamma_{p}(G\rightarrow V_{n+1})\leq n$ if and only if $f$ is satisfiable. On the one hand, suppose that  $\gamma_{p}(G\rightarrow V_{n+1})\leq n$ and $P$ is a minimum $V_{n+1}$-PDS of $G$.  Since $P\cap V_i\neq \emptyset$ for $i\in [n]$, we have $|P|\geq n$ which implies  $\gamma_{p}(G\rightarrow V_{n+1})=n$. Then, $|P\cap V_i|=1$ for $i\in [n]$. Define a truth assignment $\sigma$ to the $n$ variables as follows: for each $x\in \{x_1,x_2,\ldots, x_n\}$, $\sigma(x)=1$ if $x\in P$ and  $\sigma(x)=0$ if $\overline{x}\in P$. Clearly, every clause $C$ is dominated by a literal (a vertex) in $P$, implying the literal belongs to $C$, and hence  $f$ is satisfied by $\sigma$.  
On the other hand, suppose that $f$ is satisfied by a truth assignment, say $\sigma$. Construct a set $P$ as follows:  $P=\{x_i|i\in [n], \sigma(x_i)=1\}\cup \{\overline{x_i}|i\in [n], \sigma(x_i) = 0\}$. We claim that $P$ is a  $V_{n+1}$-PDS of $G$. If not, suppose that there exists a clause $C$ that is not dominated by $P$. This indicates that no literal in $C$ has truth 1,  which contradicts the assumption that $f$ is satisfied by $\sigma$.
\end{proof}

Indeed, it suffices to consider the PSD problem on a graph with its vertex set partitioned into two distinct subsets. As previously mentioned, to identify a subset \( D' \) from \( UB \) that dominates all vertices in \( UD \), we construct a bipartite graph, denoted as \( G_{UB, UD} \). In this graph, the two parts are \( UB \) and \( UD \), with an edge \( uv \) connecting \( UB \) to \( UD \) if and only if \( u \in UB \), \( v \in UD \), and \( v \in N_G[u] \), where \( G \) represents the input instance. Thus, \( D' \) is a \( UD \)-PDS for \( G_{UB, UD} \). Note that \( UB \cap UD \) may not be empty. When no vertex is branched, meaning that \( UB = UD = V(G) \), it follows that \( \gamma_{p}(G_{UB, UD}\rightarrow UD) = \gamma(G) \). Consequently, the MDS problem can be viewed as a specific case of the PDS problem. 

\section{A Lower Bound of $\gamma_{p}(G_{UB, UD}\rightarrow UD)$} \label{sec-4}
Consider the following relaxed integer program discussed in \cite{Xiong2024exact}. 
\begin{equation}\label{equ-IL}
%\scriptsize
\small
\begin{split}
\min \quad & \sum\limits_{v\in UB}x_v  \\
\text{s.t.} \quad & \sum\limits_{v\in N(u)}x_v\geq 1, ~~u\in UD\\
&0\leq x_v\leq 1,~~ v\in UB
\end{split}
\end{equation}

One of our objectives is to find a \( UD \)-PDS of \( G_{UB, UD} \) that is as small as possible, thereby enhancing the likelihood of pruning. It is worth mentioning that the PSD problem remains \text{NP}-hard even when constrained to bipartite graphs. This can be demonstrated through a polynomial reduction of the MDS problem to the PDS problem. In the following section, we will outline a method for determining a lower bound for \( \gamma_{p}(G_{UB, UD}\rightarrow UD) \).

%\vspace{-0.3cm}

We denote the optimal value of Equation (\ref{equ-IL}) as \( sol \). For a \( UD \)-PDS \( D \) of \( G_{UB,UD} \), we define a variable \( x^{D}_v \) associated with each vertex \( v \in UB\) such that \( x^{D}_v = 1 \) if \( v \in D \) and \( x^{D}_v = 0 \) if \( v \notin D \). 
Since $\{x^D_v\mid v\in UB\}$ satisfies all constraints in Equation (\ref{equ-IL}),  \( |D| = \sum_{v \in UB} x^{D}_v \geq \lceil sol \rceil \), which implies that \( \lceil sol \rceil \) is a lower bound for \( \gamma_{p}(G_{UB, UD} \rightarrow UD) \). We call \( \lceil sol \rceil \) a linear program relaxation lower bound. Our approach utilizes several linear programs to establish a more accurate lower bound than \(\lceil sol \rceil\).

Suppose that \( f_1, f_2, \ldots, f_k \) are \(k\) functions from \( UB \) to \( \mathbb{Z} \), where \(\mathbb{Z}\) denotes the set of integers. We consider \( k \) linear programs \( LP(f_1), LP(f_2), \ldots, LP(f_k) \), defined as follows for \( j = 1, 2, \ldots, k \):

\vspace{-0.3cm}
\begin{equation}\label{equ-fIL}
%\scriptsize
\small
\begin{split}
\min \quad & \sum\limits_{v\in UB}f_j(v)x_v  \\
\text{s.t.} \quad & \sum\limits_{v\in N(u)}x_v\geq 1, ~~u\in UD\\
\quad & \sum\limits_{v\in UB}x_v= \lceil sol\rceil\\
\quad & \sum\limits_{v\in UB}f_i(v)x_v\geq \lceil sol_i\rceil, ~\text{for} ~ i\in [j-1]\\
&0\leq x_v\leq 1,~~ v\in UB
\end{split}
\end{equation}
Here, \( sol_i \) represents the optimal value of the linear program \( LP(f_i) \). 
Specifically, if no feasible solution exists for \( LP(f_i) \), we denote \( sol_i \) as \( +\infty \).
We will prove that  \( \gamma_{p}(G_{UB, UD} \rightarrow UD) \geq \lceil sol \rceil + 1 \) if there exists an integer \( j \in [k] \) for which \( LP(f_j) \) has no feasible solution, thereby yielding a stronger lower bound.

\begin{theorem}\label{lemma-lower-bound}
For the $k (\geq 1)$ linear programs defined in Equation (\ref{equ-fIL}), if there is an integer $j\in [k]$ such that $LP(f_j)$ has no feasible solution,  then $\gamma_{p}(G_{UB, UD}\rightarrow UD)\geq \lceil sol\rceil + 1$.
\end{theorem}

\begin{proof}
Suppose, to the contrary, that there exists an integer $j \in [k]$ such that $LP(f_j)$ has no feasible solution, while there exists a $UD$-PDS $D$ with $|D| = \lceil \mathit{sol} \rceil$. Consider the vector $\textbf{e}$=($x^{D}_v: v\in UB)$, where $x^{D}_v = 1$ if $v \in D$, and $x^{D}_v = 0$ otherwise. We have 
\[
\sum_{v \in N(u)} x^{D}_v \geq 1 \quad \text{for all } u \in UD, \quad \text{and} \quad \sum_{v \in UB} x^{D}_v = \lceil \mathit{sol} \rceil.
\]

We now prove that $\textbf{e}$ is a feasible solution of all $LP(f_j)$ for $j\in [k]$, leading to a contradiction of our initial assumption. 
It suffices to show that $\sum_{v\in UB}f_i(v)x_v\geq \lceil sol_i\rceil$ for $i\in [j-1], j\in [k]$.
We will proceed using induction on $i$.

For the base case of $i=1$, 
since $[i-1] = [0] = \emptyset$, 
$\textbf{e}$ is a feasible solution of $LP(f_1)$.

%it is evident that $\sum_{v \in UB} f_1(v) x^{D}_v \geq \mathit{sol}_1$, because  $\textbf{e}$ is a feasible solution of $LP(f_1)$. Since $f_1(v)$ is an integer for every $v \in UB$, it follows that $\sum_{v \in UB} f_1(v) x^{D}_v \geq \lceil \mathit{sol}_1 \rceil$.
Now, for $i > 1$, we assume that for all $i' \in [i-1]$, the condition $\sum_{v \in UB} f_{i'}(v) x^{D}_v \geq \lceil \mathit{sol}_{i'} \rceil$ holds. Consequently, $\textbf{e}$ is a feasible solution for $LP(f_i)$, thus leading to $\sum_{v \in UB} f_i(v) x^{D}_v \geq \lceil \mathit{sol}_i \rceil.$
\end{proof}

\noindent\textbf{Selection of functions $f_1,f_2,\ldots,f_k$.} 
While the functions \( f_1, f_2, \ldots, f_k \) are broadly applicable, the selection of specific functions can significantly impact the efficiency of deriving improved lower bounds. In this paper, we define these functions based on empirical trials.

We determine the value of \( k \) using the formula \( k = \lceil \mathit{sol} \rceil \) and equally partition \( UB \) into \( k \) groups randomly. This process involves constructing \( k - q \) groups of vertices, each with cardinality \( p \), denoted as \( A_1, A_2, \ldots, A_{k-q} \), along with \( q \) groups of vertices, each with cardinality \( p + 1 \), denoted as \( A_{k-q+1}, \ldots, A_k \), where \( |UB| = pk +q \) ($q<k$). The groups are mutually exclusive, ensuring that \( A_i \cap A_j = \emptyset \) for any \( i, j \in [k] \) with \( i \neq j \), and it holds that \( A_1 \cup A_2 \cup \ldots \cup A_k = UB \). Each function \( f_j \) for \( j \in [k] \) is then defined as follows: 
\[
f_j(v) = 
\begin{cases}
1, & \text{if } v \in A_j, \\
0, & \text{otherwise}.
\end{cases}
\]

\begin{comment}
Maps $f_1, f_2, \ldots, f_k$ are obtained through empirical trials.  
First, set $k = \lceil \mathit{sol} \rceil$.  
Let $UB = \{v_1, v_2, \ldots, v_n\}$.  
Denote $s$ as the remainder when $n$ is divided by $k$.  
Randomly permute the integers $1, 2, \ldots, n$ into $i_1, i_2, \ldots, i_n$,  
and partition them into $k$ groups of nearly equal size.  
Specifically, divide $i_1, i_2, \ldots, i_n$ into:

\[
\begin{aligned}
A_1 &= \{i_{t_1}, i_{t_1+1}, \ldots, i_{t_2}\}, \\
A_2 &= \{i_{t_2+1}, i_{t_2+2}, \ldots, i_{t_3}\}, \\
&\;\; \vdots \\
A_k &= \{i_{t_k+1}, i_{t_k+2}, \ldots, i_{t_{k+1}}\},
\end{aligned}
\]
where $t_1 = 1$,  
$t_{i+1} - t_i = \lceil \frac{n}{k} \rceil$ for $i \in [s]$, and  
$t_{i+1} - t_i = \lfloor \frac{n}{k} \rfloor$ for $s < i \leq k$.
\end{comment}

%\noindent\textbf{Implementation of our method.}

After selecting the functions \(f_1, f_2, \ldots, f_k\), we can establish a new lower bound by solving a series of linear programs. Algorithm \ref{al-lower-bound} outlines this process in detail.

\vspace{-0.3cm}

\begin{algorithm}[h!]
\scriptsize
\caption{$Multi\_LP(G_{UB,UD}, \lceil \mathit{sol} \rceil)$}\label{al-lower-bound}

\KwIn{A bipartite instance $H = G_{UB,UD}$ and the linear programming relaxation lower bound $\lceil \mathit{sol} \rceil$.}

\KwOut{A new lower bound of $\gamma_p(H \rightarrow UD)$.}

$k \leftarrow  \lceil \mathit{sol} \rceil$;

$A_1, A_2, \ldots, A_k \leftarrow$ A random $k$-partition of $UB$ such that $||A_i|-|A_j|| \leq 1$ for any $i,j\in [k]$ with $i \neq j$;

 %i.e., $A_i (\neq \emptyset) \subset UB$, $A_i\cap A_j = \emptyset$ and $||A_i|-|A_j|| \leq 1$ for $i \neq j$ and $i,j\in [k]$, and $A_1\cup \ldots \cup A_k = UB$;

\For{$j = 1$ \KwTo $k$}{
    Define $f_j(v) = 1$ if $v \in A_j$, and $f_j(v) = 0$ otherwise;
}

\For{$i = 1$ \KwTo $k$}{
    Solve the linear program $LP(f_i)$;
    
    \If{$LP(f_i)$ has no feasible solution}{
        \Return{$\lceil \mathit{sol} \rceil + 1$};
    }
}

\Return{$\lceil \mathit{sol} \rceil$};
\end{algorithm}

\vspace{-0.3cm}

\noindent\textbf{Conditions for applying $Multi\_LP(G_{UB,UD})$.}   
In contrast to the previous method \cite{Xiong2024exact}, $Multi\_LP()$ employs multiple linear programs to strive for an enhanced lower bound. Although this strategy is computationally more intensive, it can yield significantly improved bounds that facilitate branch pruning in scenarios where the original bound proves inadequate. To mitigate computational overhead and effectively reduce overall solving time, we strategically implement our method only when empirical conditions indicate a high likelihood of success in producing a better bound that allows for pruning when the original bound fails. Specifically, let $D^*$ represent the current best UD-PDS of $G_{UB,UD}$ and let $r$ indicate the number of vertices that have already been branched upon and included in a UD-PDS for $G_{UB,UD}$. We invoke $Multi\_LP()$ if and only if  $\lceil sol \rceil + r < |D^*|$, $sol + r > |D^*| - 1.1$, $\frac{n}{k} \leq \left( \frac{m}{n} \right)^2$, and $r \geq 8$.

%while $G_{UB',UD'}$ denotes the current subgraph being processed by the branch-and-bound algorithm. Additionally,

%\iffalse
%We first consider the case of $t=0$. Let $\epsilon$ be a real number less %than 0.5. By Lemma \ref{limitation}, there exists a number $a$ such that %$na-\ell -sol_3 < \epsilon$. It is enough to consider $sol_1 > sol_3$. %Then, $sol_1-sol_3 \leq na -\ell - sol_3 < \epsilon < 0.5$, implying that \%( \lceil sol_3 \rceil = \lceil sol_1 \rceil \). Next, we assume that $t>0$. %In this case, we take $\epsilon'$ as a real number less than $t$. Again, by %Lemma \ref{limitation}, there exists a number $a'$ such that $na'-\ell -%sol_3 < \epsilon'$. Suppose that  $sol_1 > sol_3$. Then,  $sol_1-sol_3 \leq %na' -\ell - sol_3 < \epsilon' < t <1$.
%\fi
% 
%\end{proof}

%Theorem \ref{thm-2} establishes that \(sol_3\) serves as a lower bound that is at least as effective as \(sol_1\). In fact, if $na - \ell$ is stritly larger than $sol_1$, it always can 

\section{Reduction Rules}

There are night reduction rules that correspond to the minimum  $UD$-PDS of a bipartite instance $H=G_{UB, UD}$, where $d_H(v)\geq 1$ for every $v\in UD$.
Each reduction rule assumes that all previous rules cannot be applied. The new rules, marked by an asterisk, were introduced by us. In these rules, $\text{pds}(H)$ refers to a minimum $UD$-PDS of $H$. The equal sign between the two parts indicates the existence of a minimum UD-PDS. For instance, ``$\text{pds}(H_1) = \{s\} \cup \text{pds}(H_2)$'' implies that $H_2$ has a minimum $UD$-PDS $P$ such that $P \cup \{s\}$ is  a minimum $UD$-PDS of $H_1$.

\textbf{Rule 1} If $u\in UD$ with $d_H(u) = 1$ and $u\in N_H(v)$, then $\text{pds}(H) = \{v\}\cup \text{pds}(H-N_H[v])$.

\textbf{Rule 2} If $u\in UB$ with $d_H(u) = 1$, then $\text{pds}(H) =  \text{pds}(H-\{u\})$.

\textbf{Rule 3} If $u_1, u_2\in UB$ with $N_H(u_1)\subseteq N_H(u_2)$, then $\text{pds}(H) =  \text{pds}(H- \{u_1\})$.

\textbf{Rule 4} If $u_1, u_2\in UD$ with $N_H(u_1)\subseteq N_H(u_2)$, then $\text{pds}(H) =  \text{pds}(H- \{u_2\})$.

\textbf{Rule 5*}  If $u\in UD$ with $N_H(u) = \{u_1, u_2\}$ and $d_H(u_1) = d_H(u_2) = 2$, let $N_H(u_1)=\{u,v_1\}$,  $N_H(u_2)=\{u,v_2\}$,  $H' = \text{iden}_{G- N_H[u]}(\{v_1,v_2\})$ with identified vertex $v$. Suppose that $N_H(\{v_1,v_2\})\setminus \{u_1,u_2\} \neq \emptyset$, and let  $D' = \text{pds}(H')$. Then,  $\text{pds}(H) =  \{u_2\}\cup D'$ if $v_1\in N_H(D')$;  otherwise, $\text{pds}(G) =  (\{u_1\}\cup D')$.

\textbf{Rule 6 \cite{van2011exact}} If $u\in UB$ with $N_H(u) = \{u_1,u_2\}$ and $d_H(u_1) = d_H(u_2) = 2$, let $N_H(u_1)$ = $\{u, v_1\}$, $N_H(u_2)=\{u,v_2\}$,  $H' = \text{iden}_{H-N[u]}(\{v_1,v_2\})$ with identified vertex $v$, and  $D = \text{pds}(H')$.
Then, $\text{pds}(H) =  \{u\} \cup D$ if $v\notin D$; otherwise $\text{mds}(H) =  (D\cup \{v_1,v_2\})\setminus \{v\}$.

\textbf{Rule 7 \cite{fomin2009measure}} If $d_H(u)\leq 2$ for every $u\in UB$, then a solution can be computed in polynomial time using maximum matching.

\textbf{Rule 8*} Let $u\in UB$ have 2-degree neighbors $v_1,\ldots, v_k$, $k\geq 1$. Let $N_H(v_i) = \{u, u_i\}$ for $i\in [k]$. If there is a vertex $u_i$ such that $(N_H(u_i)\setminus \{v_i\})\subseteq \cup_{j \in [k]\setminus \{i\}} N_H(u_j)$,
then $\text{pds}(H) =  \{u\}\cup \text{pds}(H-N_H[u])$.

\textbf{Rule 9*}  Let $u\in UD$ such that $d_H(u)=2$ and $N_H(u) =\{v_1,v_2\}$. If there is a vertex $v\in UB\setminus \{v_1,v_2\}$ such that $N_H(v)\subseteq N_H(v_1)\cup N_H(v_2)$, then $\text{pds}(H) =  \text{pds}(H-v)$. 

\begin{theorem}
\textbf{Rules 1-9} are correct.
\end{theorem}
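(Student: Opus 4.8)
The plan is to treat the nine rules uniformly. For each rule the displayed identity asserts that a minimum $UD$-PDS of $H$ can be recovered from a minimum $UD$-PDS of the reduced instance with the indicated size offset, so correctness amounts to two directions: \emph{soundness}, that the set produced on the right-hand side is a valid $UD$-PDS of $H$ of the claimed cardinality, and \emph{optimality}, that some minimum $UD$-PDS of $H$ actually has the asserted structure, so the reduction cannot discard an optimal solution. Since $H = G_{UB,UD}$ is bipartite with parts $UB$ and $UD$, a $UD$-PDS is simply a minimum-cardinality subset of $UB$ dominating every vertex of $UD$ (the constraint $P\cap V_j\neq\emptyset$ being automatic with two parts), so each rule becomes a short exchange or forcing argument on such subsets. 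For \textbf{Rules 6} and \textbf{7} I would only check that the hypotheses match those in \cite{van2011exact} and \cite{fomin2009measure} and invoke the correctness established there, so the substantive work concerns \textbf{Rules 1–5, 8, 9}.

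\textbf{Rules 1–4} I would dispatch by forcing and replacement. For \textbf{Rule 1}, a vertex $u\in UD$ with $d_H(u)=1$ has a unique dominator $v$, so $v$ lies in every $UD$-PDS, and removing $N_H[v]$ leaves exactly the residual instance. For \textbf{Rule 2}, because \textbf{Rule 1} is inapplicable the single $UD$-neighbor $w$ of a degree-one $u\in UB$ has another neighbor, so any minimum PDS using $u$ can instead dominate $w$ through that other neighbor without increasing size, showing $u$ is never needed. \textbf{Rule 3} and \textbf{Rule 4} are the two neighborhood-containment cases: $N_H(u_1)\subseteq N_H(u_2)$ with $u_1,u_2\in UB$ lets us substitute $u_2$ for $u_1$ in any PDS, while the same containment for $u_1,u_2\in UD$ makes the covering constraint of $u_2$ implied by that of $u_1$, so $u_2$ can be deleted from $UD$. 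Each is a one-line argument, and the nonemptiness requirement of a PDS is automatic here.

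\textbf{Rules 8} and \textbf{9} are exchange arguments of the same flavor but need careful bookkeeping of coverage. For \textbf{Rule 8}, if an optimal PDS avoids $u$, then each degree-two neighbor $v_i$ forces its other neighbor $u_i$ into the PDS; the hypothesis $\big(N_H(u_i)\setminus\{v_i\}\big)\subseteq\bigcup_{j\neq i}N_H(u_j)$ then guarantees that swapping $u_i$ out for $u$ keeps every vertex formerly covered by $u_i$ dominated, yielding an equally small PDS that contains $u$. For \textbf{Rule 9}, dominating the degree-two vertex $u$ forces $v_1$ or $v_2$ into any PDS; since $N_H(v)\subseteq N_H(v_1)\cup N_H(v_2)$, replacing $v$ by the remaining one of $v_1,v_2$ (or simply deleting $v$ when both are already present) preserves all coverage, so a minimum PDS avoiding $v$ exists and $\text{pds}(H)=\text{pds}(H-v)$.

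The main obstacle is the degree-two folding in \textbf{Rule 5} (and, by citation, \textbf{Rule 6}), since identifying $v_1$ and $v_2$ into a single vertex $v$ alters the graph and we must track domination of $v_1,v_2$ through the merged vertex. I would first argue that some optimal PDS of $H$ uses exactly one of $u_1,u_2$ to dominate $u$, and then build a cardinality-preserving correspondence between solutions of $H$ and of $H'=\text{iden}_{G-N_H[u]}(\{v_1,v_2\})$: a solution $D'$ of $H'$ that dominates $v$ through an original neighbor of $v_1$ lets us take $u_2$ (which covers $v_2$ directly), and otherwise take $u_1$, adding exactly one vertex in each case. The delicate points are verifying $\gamma_p(H)=\gamma_p(H')+1$ in both directions, checking that the condition $N_H(\{v_1,v_2\})\setminus\{u_1,u_2\}\neq\emptyset$ keeps $v$ non-isolated so that $H'$ is a legitimate instance, and confirming that the test $v_1\in N_H(D')$ correctly identifies which endpoint of the merged vertex has been dominated. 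Throughout, I would record where the inapplicability of earlier rules supplies the degree and containment preconditions these arguments implicitly use, such as every $UD$-vertex having degree at least two once \textbf{Rule 1} fails.
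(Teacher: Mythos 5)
Your proposal is correct and follows essentially the same route as the paper: short forcing/exchange arguments for Rules 1--4, 8, 9, citation of the literature for Rules 6--7, and a bidirectional, cardinality-preserving correspondence between solutions of $H$ and the folded instance $H'$ for Rule 5. The paper's proof executes exactly the Rule 5 plan you outline (including the case split on whether one or both of $u_1,u_2$ lie in an optimal solution, repaired via a vertex $w\in N_H(\{v_1,v_2\})\setminus\{u_1,u_2\}$), and it in fact gives less detail than you do on Rules 1--4, which it dismisses as easily verified.
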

\begin{proof}
The proof of \textbf{Rules 1-4} can be easily verified. \textbf{Rules 6-7} are provided in previous literature. %In the following, we prove \textbf{Rules 5, 8, 9}.

Regarding \textbf{Rule 5}, we denote by $UB'$ and $UD'$ the two parts of $H'$, i.e., $UB'=UB\setminus \{u_1,u_2\}$ and $UD'=(UD\setminus \{v_1,v_2\})\cup \{v\}$. Since $D'$ is a $UD'$-$PDS$ of $H'$, $v_1\in N_H(D')$ or $v_2\in N_H(D')$. Therefore, $\{u_2\}\cup D'$ is a $UD$-$PDS$ of $H$ when $v_1\in N_H(D')$ and $\{u_1\}\cup D'$ is a $UD$-$PDS$ of $H$ when $v_1\notin N_H(D')$. This implies that  $\gamma_{p}(H\rightarrow UD)\leq \gamma_{p}(H'\rightarrow UD')+1$. On the other hand, let $D=\text{pds}(H)$. Suppose that $|D|\leq |D'|$. Clearly, $\{u_1,u_2\}\cap D \neq \emptyset$. If $\{u_1,u_2\}\subseteq D$, then $(D\setminus \{u_1,u_2\})\cup \{w\}$ is a $UD'$-PDS of $H'$, where $w$ is an arbitrary vertex in $N_H(\{v_1,v_2\})\setminus \{u_1,u_2\}$. This implies that $\gamma_{p}(H'\rightarrow UD')\leq |D'|-1$, a contradiction.   If $\{u_1,u_2\}\not \subseteq D$, then only one vertex of $u_1$ and $u_2$ belongs to $D$, say $u_1\in D$ by symmetry. In this case, $D\setminus \{u_1\}$ is a $UD'$-PDS of $H'$, implying that $\gamma_{p}(H'\rightarrow UD')\leq |D'|-1$, and a contradiction.

Regarding \textbf{Rule 8}, it suffices to prove that there is a $\text{pds}(H)$ containing $u$. Let $D$ be a minimum $UD$-$PDS$ of $H$. 
If $u\notin D$, then $u_i\in D$ for every $i\in [k]$.
%W.l.o.g., we assume $(N_H(u_1)\setminus \{v_1\})\subseteq \cup_{j=2}^k N_H(u_j)$.
Then, $D'=(D\setminus \{u_i\})\cup \{u\}$ is a $UD$-$PDS$ of $H$ and $|D'|\leq |D|$.

Regarding \textbf{Rule 9}, we will demonstrate that there is a minimum $UD$-$PDS$ of $H$ such that $v\notin PDS$. Let $D$ be a minimum $UD$-$PDS$ of $H$. Given that $u$ is a 2-vertex, it follows that $\{v_1,v_2\} \cap D \neq \emptyset$, say $v_1\in D$. If $v\in D$, then $D'=(D\setminus \{v\})\cup \{v_2\}$ is a minimum $UD$-$PDS$ of $H$ (Note that $v_2\notin D$; otherwise,  $|D'|< |D|$, a contradiction).
\end{proof}

\noindent\textbf{Time Complexity} Given an bipartite instance  $H=G_{UB,UD}$ with $|UB|=n$ and $|UD|=m$, our branch algorithm recursively applies these reduction rules to $H$, as shown in  Algorithm \ref{al-1}.
\textbf{Rule 1} and \textbf{Rule 2} are simple, which can be implemented in $O(m)$ and $O(n)$, respectively. \textbf{Rule 3} (or \textbf{Rule 4}) enumerates each vertex in $UB$ (or $UD$) and its neighbors, which requires $O(|E(H)|\Delta(H))$ time, as detailed in \cite{Xiong2024exact}.
%In detail, for a $u\in UB$ (or $u\in UD$), 
%scan all its neighbors \(v \in N_H(u)\). For each such neighbor \(v\), increment a counter for every neighbor \(w \in N_H(v)\) by one. After completing these operations over all \(v\), there exists a node \(w\) satisfying \(N_H(u) \subseteq N_H(w)\) if and only if the counter for \(w\) is at least \(d_H(u)\).
%This procedure runs in $O(|E(H)|\Delta(H))$ time. 
\textbf{Rule 5} is implemented by examining 2-vertices in $UD$ such that their neighbors are also 2-vertices and conducting the identifying operation, which requires $O(n)$ time. Similarly, \textbf{Rule 6} is also implemented in $O(n)$. \textbf{Rule 7} involves finding a maximum matching, which can be done in $O(\sqrt{n+m} |E(H)|)$ \cite{hopcroft1973n}. Regarding \textbf{Rule 8},  we begin by calculating \( c(w) \) for each \( w \in N_H(\{ u_1, \ldots, u_k \}) \), where \( c(w) \) is the number of vertices in \( \{ u_1, \ldots, u_k \} \)  adjacent to \( w \). Initially, $c(w)=0$. For each $i\in [k]$, we examine the neighborhood \( N_H(u_i) \), and each time a vertex \( w \) is found in \( N_H(u_i) \), we increment \( c(w) \) by 1. After processing all  \( u_i \),  \( c(w) \) is determined for every \( w \in N_H(\{ u_1, \ldots, u_k \}) \). Note that \( N_H(u_i) \setminus \{ v_i \} \subseteq \cup_{j \in [k] \setminus \{ i \}} N_H(u_j) \) if and only if \( c(w) \geq 2 \) for all \( w \in N_H(u_i) \setminus \{ v_i \} \). Therefore, we can identify a vertex \( u_i \) satisfying the condition by examining \( N_H(u_i) \) for each $i\in [k]$. This requires \( O(k \Delta(H)) \) time. Since \( k \leq d_H(u) \), the overall time complexity for examining all vertices $u\in UB$ is \( O(|E(H)| \Delta(H)) \). \textbf{Rule 9} enumerates all 2-vertices in \( UD \). Using a similar analysis as above, the overall time complexity is \( O(m (\Delta(H))^2) \). 

\begin{comment}
For each 2-degree vertex \( u \in UD \) with neighbors \( N_H(u) = \{v_1, v_2\} \), we apply an analysis similar to that of Rule 3 to determine whether there is a vertex \( v \in UB \) such that \( N_H(v) \subseteq N_H(v_1) \cup N_H(v_2) \), which requires \( O((d_H(v_1) + d_H(v_2)) \Delta(H)) \) time. Since this process involves enumerating all 2-vertices in \( UD \), the overall time complexity is \( O(m (\Delta(H))^2) \).
\end{comment}

\section{The Algorithm ParDS}

Given an instance \( H = G_{UB, UD} \) and a vertex \( s \in UB \), when branching at \( s \), we consider two cases: one in which \( s \) is included, resulting in the sub-instance \( H - N_H[s] \), and one in which \( s \) is excluded, yielding the sub-instance \( H - s \).  We denote \( D \) as the set of vertices included in a \( UD \)-PDS, and \( D^* \) as the current best \( UD \)-PDS of \( H \). The instance is first reduced to \( G_{UB', UD'} \), where \( UB' = UB - D \) and \( UD' = UD - N_H(D) \). 
%{\color{red}Denote $\ell b(G_{UB', UD'})$ be the lower bound of our method, i.e., $\ell b(G_{UB', UD'})$ is the output of $LB(G_{UB', UD'})$.} 
Then, if \( |D_v| + \ell b(G_{UB', UD'}) \geq |D^*| \), there is no need for further branching on the vertices in \( H - N_H[D] \),  where \( \ell b(G_{UB', UD'}) \) represents a lower bound of \( \gamma_p(G_{UB', UD'} \rightarrow UD') \).
We introduce an algorithm named \( ParDS \) that resolves the PDS problem by integrating the lower bound with reduction rules. The \( ParDS \) algorithm accepts an instance \( H = G_{UB, UD} \), a lower bound \( \ell b(H) \) for \( \gamma_p(H \rightarrow UD) \), two variables \( r \) and \( q \) as inputs. It produces a minimum \( UD \)-PDS of \( H \) if \( \ell b(H) + r < q \); otherwise, it returns \( UB \). In this context, \( q =|D^*| \) and \( r \) denotes the number of vertices that have been branched and included in a \( UD \)-PDS of \( H \). Note that \( ParDS(H, 0, |UB|) \)  yields the minimum \( UD \)-PDS for \( H \).

The pseudocode of ParDS is described in Algorithm \ref{al-1}. If  $\ell b(H)+r\geq q$, then the algorithm can not find a better $UD-PDS$ of $H$ than the current best $UD-PDS$ $D^*$ and the algorithm returns the set $UB$ to show this case (lines 1-2); otherwise, the algorithm enters into the reduction process (lines 3-11). Let $H'=G_{UB',UD'}$ be the reduced graph by applying reduction \textbf{Rule $k$} to $H$ (line 5). If $k=7$, the algorithms directly return a minimum $UD-PDS$ of $H$ (lines 6-7); if $k\in \{1,5,6,8\}$, then  $\gamma_{p}(H\rightarrow UD)=\gamma_{p}(H'\rightarrow UD')+1$ and we consider ParDS$(H',r+1,q)$ (lines 8-9); if $k\in \{2,3,4,9\}$, then $\gamma_{p}(H\rightarrow UD)=\gamma_{p}(H'\rightarrow UD')$ and we consider ParDS$(H',r,q)$ (lines 10-11).

\vspace{-0.2cm}

\begin{algorithm}[h!]
\scriptsize
%\small
\caption{ParDS ($G_{UB,UD}$, $\ell b$, $r$, $q$)}\label{al-1}

%Find a minimum $UD$-$PDS$ of a bipartite instance $G_{UB,UD}$

\KwIn{A bipartite instance $H=G_{UB,UD}$, $\ell b(H)$,
two variables $r$ and $q$}

\KwOut{A minimum $UD$-$PDS$ of $H$ if $\ell b(H)+r<q$, or $UB$ otherwise}

\If{$\ell b(H)+r\geq q$} { \Return{$UB$;} }

\If{A reduction rule can reduce $H$} {

 $k \leftarrow \min \{i\mid i\in \{1,2,\ldots,9\} ~~\text{and \textbf{Rule $i$} is feasible} \}$;

 $H' \leftarrow$ Reduce $H$ by \textbf{Rule $k$};

\If{$k=7$} {  \Return{A minimum $UD$-$PDS$ of $H$}; } 
\ElseIf{$k\in\{1,5,6,8\}$} {\Return{$ParDS(H',r+1,q)$}}
\ElseIf{$k\in\{2,3,4,9\}$}{\Return{$ParDS(H',r,q)$};}  

}

$s \leftarrow$ A vertex chosen for branching using the LDPB strategy; % Lexicographic Degree-Priority Branching

\begin{comment}
Calculate $d, max3d$, and $C(H)$;

\If{$d\geq 4$ and $C(H)=\emptyset$} { $s\leftarrow$  A random vertex in $UB\cap V_{d}(H)$;}

\ElseIf{$d\geq 4$ and $C(H)\neq \emptyset$} { 
$s' \leftarrow$ A vertex in $C(H)$ with the minimum degree in $H$;

$s \leftarrow$ A vertex in $N_{H}(s')\cap V_d(H)$ such that $\text{ascend}(s)$ has the minimum lexicographical order; %// {\color[gray]{0.6}$\text{ascend}(s)$ is the ascending order of degrees of neighbors of $s$}
}

\ElseIf{$d=3$ and $max3d\geq 4$} { 
 $s \leftarrow $ a vertex in $UB\cap V_3(H)$ such that $\text{descend}(s)$ has the maximum lexicographical order; %// {\color[gray]{0.6} $\text{descend}(s)$ is the descending order of degrees of neighbors of $s$}
 }

\Else{ $s \leftarrow$ A vertex in $UB\cap V_3(H)$ with the maximum $|N_H(s)\cap V_2(H)|$;}

\end{comment}

 Compute the optimal values $sol(H - N_H[s])$ and $sol(H - s)$ by solving the corresponding linear programs defined in Equation~\ref{equ-IL};  

$\ell b(H - N_H[s]) \leftarrow \lceil sol(H - N_H[s]) \rceil$, $\ell b(H - s) \leftarrow  \lceil sol(H - s) \rceil$;

\If{$\ell b(H - N_H[s]) + r < q$ \text{and} $sol(H - N_H[s]) + r > q - 1.1$ \text{and} $\frac{n}{k}\leq (\frac{m}{n})^2$ \text{and} $r\geq 8$}{
    $\ell b(H - N_H[s]) \gets Multi\_LP(H - N_H[s],\, \ell b(H - N_H[s]))$;
}
\If{$\ell b(H - s) + r < q$ \text{and} $sol(H - s) + r > q - 1.1$ \text{and} $\frac{n}{k}\leq (\frac{m}{n})^2$ \text{and} $r\geq 8$}{
    $\ell b(H - s) \gets Multi\_LP(H - s,\, \ell b(H - s))$;
}

%$sol_{t}\leftarrow \{s\}\cup ParDS(H-N_{H}[s],\ell b(H-N_{H}[s]),a+1,b)$;

%$sol_{d}\leftarrow Par$-$PDS(H-s,\ell b(H-s),a,b)$;

%$sol \leftarrow$  $\min \{sol_{t}, sol_{d}\}$;

$sol_{t}\leftarrow \{s\}\cup ParDS(H-N_{H}[s],\ell b(H-N_{H}[s]),r+1,q)$;

$sol_{d}\leftarrow Par$-$PDS(H-s,\ell b(H-s),r,q)$;

$sol \leftarrow$  A solution in $\{sol_{t}, sol_{d}\}$ with a smaller cardinality;

\If{$|sol|+r<q$} {
 $q\leftarrow |sol|+r$;~~ \Return{$sol$};}
 
\Else {\Return{$UB$};}
\end{algorithm}

\vspace{-0.4cm}

Next, the algorithm selects a vertex for branching (line 12). To enhance the efficiency of vertex selection, we introduce a Lexicographic Degree-Priority Branching (LDPB) strategy. This strategy incorporates three key parameters: the maximum degree \(d\) among the vertices in \(UB\), the maximum degree \(max3d\) among vertices in \(N_H(UB \cap L_3(H))\), and the set \(C(H)\) of vertices \(v \in UD\) for which \(|N_H(v) \cap L_d(H)| \geq d_H(v) - 1\), where $L_k(H)$ denote the set of vertices in $H$ with degree $k$. For any vertex \(v \in V(H)\), we define \(\text{ascend}(v)\) and \(\text{descend}(v)\) to represent the ascending and descending order of the degrees of \(v's\) neighbors, respectively. If \(d \geq 4\) and \(C(H) = \emptyset\), a vertex \(s\) is chosen randomly from \(UB \cap L_d(H)\). If \(d \geq 4\) and \(C(H) \neq \emptyset\), vertex \(s\) is selected from \(N_H(s') \cap L_d(H)\), where \(s'\) is a vertex in \(C(H)\) with the minimum \(d_H(s')\), and \(\text{ascend}(s)\) has the minimum lexicographical order. When \(d = 3\) and \(max3d \geq 4\), vertex \(s\) is chosen from \(UB \cap L_3(H)\) such that \(\text{descend}(s)\) has the maximum lexicographical order. Lastly, if \(d = 3\) and \(max3d \leq 3\), \(s\) is selected from \(UB \cap L_3(H)\) with the maximum \(|N_H(s) \cap V_2(H)|\).  

After this, the algorithm branches at vertex $s$ and determines the lower bounds of $H-N_H[s]$ and $H-s$ using the proposed $Multi\_LP$ (lines 13-18). It then recursively computes $sol_{t} = \{s\}\cup ParDS(H-N_H[s],\ell b(H-N_H[s]),r+1,q)$ and $sol_{d} = ParDS(H-s,\ell b(H-s),r,q)$ (lines 19-20). Let $sol$ be the smaller one between $sol_{t}$ and $sol_{d}$ (line 21). Clearly, $|sol| = \gamma_{p}(H\rightarrow UD)$. Since $q=|D^*|$, the algorithm updates $q$ by $|sol|+r$ and returns $sol$ if $|sol|+r<q$ (lines 22-23); and returns $UB$ otherwise (lines 24-25).

It should be noted that for an arbitrary instance \(G\), we construct the bipartite graph \(G_{UB,UD}\) with \(UB = UD = V(G)\) and the edge set \(E = \{uv \mid u \in UB, v \in UD, uv \in E(G)\}\). Then, we can invoke \(\mathrm{ParDS}(G_{UB,UD}, 0, 0, |V(G)|)\) to derive an MDS of $G$.

\section{Experiments} \label{sec-6}
This section evaluates the effectiveness of our ParDS algorithm. We start by introducing four benchmark datasets that are commonly referenced in the literature \cite{cai2021two,jiang2023exact,inza2024exact,Xiong2024exact}, detailed as follows.

%Next, we outline the baseline algorithms and the experimental setup employed in our %analysis. Finally, we present the experimental results, including a comprehensive %examination of running times and an evaluation of the lower bounds.

\textbf{T1} \footnote{https://github.com/yiyuanwang1988/MDSClassicalBenchmarks} : This dataset comprises 53 graph families that vary in scale and density \cite{romania2010ant}. Each family includes 10 randomly connected graphs of the same order and size, generated using different random seeds (0-9).

\textbf{Network Repository} \footnote{https://networkrepository.com}:  We selected 120 graphs from this diverse collection \cite{rossi2015network}, which spans eight families, to assess our method on real-world networks.

\textbf{BD3/BD6} \footnote{https://doi.org/10.17632/rr5bkj6dw5.4}: The BD3 and BD6 families contain 38 random graphs, as noted in \cite{Xiong2024exact}. These graphs were generated by Inza et al. \cite{inza2024exact}.

%\vspace{-0.2cm}
%\subsection{Datasets}
\textbf{UDG}\footnote{https://github.com/yiyuanwang1988/MDSClassicalBenchmarks} : This dataset models wireless sensor networks and consists of two families of graphs \cite{potluri2011two}.
Each family includes graphs of varying scales, with 10 instances generated at each scale using random seeds (0-9).

\subsection{Baseline Algorithms}
We compare our algorithms with state-of-the-art exact MDS solvers, including EMOS \cite{jiang2023exact} and four BIB‑based algorithms introduced by \cite{Xiong2024exact}.  
Notably, \cite{inza2024exact} proposed an implicit enumeration algorithm for MDSP, but we opted not to include it in our evaluations due to the difficulty of reproducing their results, as similarly highlighted in \cite{Xiong2024exact}.
The source code for EMOS \cite{jiang2023exact}, and the BIB-based algorithms \cite{Xiong2024exact}  is available to the public on GitHub\footnote{https://github.com/scikit-learn-contrib/radius\_clustering.git}.

All algorithms were implemented in C++ and compiled with gcc 7.1.0 using the `-O3' optimization flag. We employed HIGHS as the LP solver for our algorithms. The experiments were conducted on CentOS Linux 7.6.1810, utilizing an Intel(R) Xeon(R) w5-3435X  operating at 3.10 GHz with 256 GB of RAM. Our code, along with the experimental results, can be found here \footnote{https://github.com/yunanwubei/Exact-aigorithm.git}. Each trial was allocated a time limit of five hours, consistent with the settings in \cite{jiang2023exact} and \cite{Xiong2024exact}.

\subsection{Results and Analysis}

\begin{table}[!h]
\centering

\label{table1}
% \captionsetup{font=large}
\resizebox{\columnwidth}{!}{
\begin{tabular}{>{\raggedright\arraybackslash}p{0.05\linewidth}>{\raggedleft\arraybackslash}p{0.07\linewidth}|cccccc}
\toprule
\multicolumn{8}{c}{\textbf{Dataset: T1}}\\
\midrule
$|V|$ & $|E|$ & ParDS& BIBLP& BIBCO& BIBLP-IF& BIBCO-IF& EMOS\\
\midrule
150 & 150& \textbf{0.01}& \textbf{0.01}& \textbf{0.01}& \textbf{0.01}& 623.14(2)& 12.38
\\
 150& 250& \textbf{0.01}& \textbf{0.01}& 1.44& 0.02& 26.2& 1206.75
\\
 150& 500& \textbf{1.74}& 3.86& 2320.74(5)& 7.39& 1734.28(3)& 3000(10)
\\
 150
& 750& 8.28& \textbf{5.76}& 2469.9(6)& 8.6& 1048.95& 3000(10)
\\
 150& 1000& 24.9& \textbf{18.06}& 2267.68(4)& 16.88& 690.7(1)& 3000(10)
\\
 150
& 2000& 13.31& 7.52& 63.31& \textbf{5.79}& 9.88& 3000(10)
\\
 150& 3000& 19.3& 6.87& 15.75& 5.71& \textbf{2.7}& 176.54
\\
 200& 250& \textbf{0.01}& \textbf{0.01}& 0.14& \textbf{0.01}& 2737.01(8)& 3000(10)
\\
 200& 500& \textbf{1.3}& 3.59& 2972.11(9)& 12.99& 3000(10)& 3000(10)
\\
 200& 750& \textbf{41.67}& 121.13& 3000(10)& 134.1& 3000(10)& 3000(10)
\\
 200& 1000& \textbf{428.22}& 440.8& 3000(10)& 681.81& 3000(10)& 3000(10)
\\
 200& 2000& 1674.39& 1085.96& 3000(10)& \textbf{1024.05}& 3000(10)& 3000(10)
\\
 200& 3000& 742.7& 500.7& 3000(10)& \textbf{391.53}& 3000(10)& 3000(10)
\\
 250& 250& \textbf{0.01}& \textbf{0.01}& \textbf{0.01}& \textbf{0.01}& 3000(10)& 3000(10)
\\
 250& 500& \textbf{2.28}& 9.64& 2759.28(9)& 12.55& 3000(10)& 3000(10)
\\
 250& 750& \textbf{200.66}& 678.11(1)& 3000(10)& 829.69(1)& 3000(10)& 3000(10)
\\
 250& 1000& \textbf{2052.42(5)}& 2383.03(7)& 3000(10)& 2378.91(7)& 3000(10)& 3000(10)
\\
 250& 2000& \textbf{2893.88(9)}& 3000(10)& 3000(10)& 3000(10)& 3000(10)&3000(10)
\\
 300& 300& \textbf{0.01}& \textbf{0.01}& \textbf{0.01}& \textbf{0.01}& 3000(10)& 3000(10)
\\
 300& 500& \textbf{1.46}& 4.05& 2857.65(9)& 3.14& 3000(10)& 3000(10)
\\
 300& 750& \textbf{211.74}& 1592.73(3)& 3000(10)& 2309.31(7)& 3000(10)& 3000(10)
\\
 300& 1000& \textbf{2854.51(9)}& 3000(10)& 3000(10)& 3000(10)& 3000(10)&3000(10)
\\
 800& 1000& \textbf{7.42}& 16.79& 3000(10)& 20.26& 3000(10)& 3000(10)
\\
 1000& 1000& \textbf{0.01}& \textbf{0.01}& \textbf{0.01}& \textbf{0.01}& 3000(10)&3000(10)
\\
\midrule
\multicolumn{8}{c}{\textbf{Dataset: Network Repository}}\\
\midrule
Class & num& ParDS& BIBLP& BIBCO& BIBLP-IF& BIBCO-IF& EMOS\\
\midrule
 bio& 28& \textbf{274.73}& 397.96(1)& 958.8(3)& 3600(12)& 6004.61(20)& 5867.16(19)\\
 eco& 6& \textbf{0.01}& \textbf{0.01}& \textbf{0.01}& \textbf{0.01}& \textbf{0.01}& \textbf{0.01}
\\
 econ& 11& 7.34& 3.92& 1200(4)& \textbf{1.44}& 1500.03(5)& 1202.83(4)\\
 email& 5& \textbf{9.38}& 14.73& 300(1)& 300(1)& 900(3)& 1500(5)\\
 p2p& 4& \textbf{0.01}& 0.08& 6.18& 0.02& 1200(4)& 722.42(2)\\
 ia& 16& \textbf{11.27}& 13.94& 313.81(1)& 501.64(1)& 1800.04(6)& 1200.01(4)\\
 prox& 6& 0.04& \textbf{0.01}& 0.49& \textbf{0.01}& 0.22& 18000(6)\\
 soc& 11& \textbf{181.33}& 900.05(3)& 1205.92(4)& 901.95(3)& 2128.31(7)& 2700(9)\\
 infect& 2& \textbf{0.01}& \textbf{0.01}& 0.03& 201.64& 300(1)& 300(1)\\
 Erdos& 7& \textbf{0.01}& \textbf{0.01}& 0.73& \textbf{0.01}& 42.21& \textbf{0.01}
\\
 others& 6& \textbf{300.02(1)}& 361.49(1)& 347.71(1)& 900.01(3)& 1500(5)& 900.02(3)\\
 $G_{800}$& 4 & \textbf{0.34}& 1.33& 1200(4)& 1.8& 1200(4)&1200(4)\\
 $G_{2000}$& 4 & \textbf{490.22}& 1200(4)& 1200(4)& 1200(4)& 1200(4)&1200(4)\\
 $G_{3000}$& 3 & \textbf{4.73}& 6.63& 900(3)& 6.34& 900(3)&900(3)\\
\midrule

 \multicolumn{8}{c}{\textbf{Dataset: BD3/BD6}}\\
\midrule
Class & num & ParDS& BIBLP& BIBCO& BIBLP-IF& BIBCO-IF& EMOS\\
\midrule
 BD3 & 18& 2586.34(4)& 1456.34(3)& 1097.24(3)& 2901.8(5)& \textbf{980.51(3)}&5400(18)\\
 BD6 & 18& \textbf{0.08}& 0.17& 1.38& 0.19& 0.67&5700(18)\\
 \midrule
 
\multicolumn{8}{c}{\textbf{Dataset: UDG}}\\
\midrule
$|V|$ & Type & ParDS& BIBLP& BIBCO& BIBLP-IF& BIBCO-IF& EMOS\\
\midrule
100 & A & \textbf{0.01}& \textbf{0.01}& \textbf{0.01}& \textbf{0.01}& 0.22& \textbf{0.01}
\\
100 & B & \textbf{0.01}& \textbf{0.01}& \textbf{0.01}& \textbf{0.01}& \textbf{0.01}& \textbf{0.01}
\\
250 & A & \textbf{0.01}& \textbf{0.01}& \textbf{0.01}& 0.06& 1556.3(4)& 198.24
\\
250 & B & \textbf{0.01}& \textbf{0.01}& \textbf{0.01}& 0.02& 62.62& 2.59
\\
500& A& 0.18& \textbf{0.02}& 1.39& 0.68& 3000(10)& 3000(10)
\\
 500& B& 0.16& \textbf{0.01}& 0.1& 0.05& 1271.82(3)& 1674.09(5)\\
 800& A& 1.82& \textbf{0.03}& 0.91& 0.38& 3000(10)& 3000(10)
\\
 800& B& 2.99& \textbf{0.16}& 104.58& 5.39& 3000(10)& 3000(10)
\\
 1000& A& 4.17& \textbf{0.06}& 3.34& 0.39& 3000(10)& 3000(10)
\\
 1000& B& 16.33& \textbf{0.42}& 697.86& 30.31& 3000(10)& 3000(10)
\\
 
\midrule
\end{tabular}}
\caption{Running time comparison of six algorithms on four datasets. The running time is measured in minutes (min).}\label{table1}
\end{table}

% Figure \ref{} shows the number of instances solved by the algorithms within a given time frame. 
%Among all the graphs evaluated in the experiment, ParDS successfully solved 470 instances within 5 hours. In contrast, BIBLP and BIBCO managed to solve only 455 and 355 instances within the 5-hour time limit.

%The traditional method EMOS did not achieve the best result in any category.

\begin{comment}
%Table \ref{table1} presents the running times for all evaluated algorithms across four datasets, with each instance subject to a 5-hour time limit. A notation of `($k$)' indicates that the algorithm did not obtain a solution within the specified time for $k$ instances in that category. The fastest recorded time in each category is highlighted in bold, marking the most efficient algorithm for that particular type of instance. 
Considering the possible influence of external factors such as the hardware environment on the running speed, we regard all running times less than 0.01 min as 0.01 min.
Consistent with \cite{Xiong2024exact}, we have excluded T1 graphs with fewer than 100 vertices and UDG graphs with fewer than 50. Furthermore, for the other omitted T1 categories, no algorithm successfully solved any of the instances. 
\end{comment}

Table \ref{table1} presents the running times of all evaluated algorithms across four datasets, with each instance constrained by a 5-hour time limit, where the notation`($k$)' signifies that the algorithm was unable to find a solution within five hours for $k$ instances in that category. The fastest recorded time in each category is highlighted in bold. Considering the potential influence of the hardware environment on running speeds, we have standardized all recorded running times of less than 0.01 minutes to 0.01 minutes. As noted in \cite{Xiong2024exact}, we excluded T1 graphs with fewer than 100 vertices and UDG graphs with fewer than 50 vertices. Furthermore, for the other omitted T1 categories, no algorithm was able to successfully solve any of the instances.

%Note that if the runtime of an algorithm is less than 0.01min, we record its runtime as 0.01min.

%no algorithm succeeded on these instances. 

%Throughout all categories, ParDS consistently outperformed its competitors, achieving the fastest time in 36 categories and recording the absolute shortest time in 23. In contrast, the best-performing alternative among the other algorithms, BIBLP, obtained the fastest time in 22 categories, with the absolute shortest time in only eight of those.

Overall, ParDS showcased exceptional performance on the datasets,  
achieving the fastest runtime in 36 categories and outperforming all competitors in 22 without any ties.  In contrast, the leading alternative among the competing algorithms, BIBLP, achieved the fastest runtime in 22 categories, with strictly superior performance in only eight instances.
%achieving the fastest runtime strictly outperforming all others”

On the T1 dataset, ParDS achieved the fastest runtime in 19 out of 25 categories, significantly surpassing its competitors. In comparison, BIBLP, the second-best performer, secured the top speed in only eight categories. Notably, all other algorithms completely failed in the V250E750 and V300E750 categories, while ParDS excelled by successfully solving one instance in each of these two challenging categories. Moreover, ParDS fully resolved all instances in the V250E750 class, marking a significant advancement that no prior algorithm has achieved for this category of graphs.

For power-law graphs in Network Repository, which are commonly found in real-world networks and pose significant challenges for approximation \cite{DBLP:journals/tcs/GastHK15}, ParDS once again exhibited exceptional performance. It recorded the fastest runtime in 12 out of 14 categories, while its closest competitor, BIBLP, excelled in only four categories. Notably, in the `USGrid' instance from the `others' category, ParDS outperformed competing algorithms by an astonishing factor of over 3,411 times.

%In the BD6 category, ParDS demonstrated outstanding performance, outperforming other algorithms in 16 out of 18 instances. In contrast, the performance ranking within the BD3 dense-graph category is as follows: BIBCO-IF $>$ BIBCO $>$ BIBLP $>$ ParDS $>$ BIBLP-IF. This ranking correlates with LP invocation counts, which are ordered as: BIBCO $<$ BIBLP $<$ ParDS. 
%The reason is that LPs on dense graphs involve large constraint matrices, which significantly increase solve times.
%Thus, the reduced frequency of LP operations may provide alternative methods with a competitive advantage over ParDS for dense graphs.

In the BD6 category, ParDS demonstrated outstanding performance, outperforming other algorithms in 16 out of 18 instances. In contrast, within the BD3 dense-graph category, the performance ranking is as follows: BIBCO-IF $>$ BIBCO $>$ BIBLP $>$ ParDS $>$ BIBLP-IF. This ranking correlates with LP invocation counts, ordered as: BIBCO $<$ BIBLP $<$ ParDS. The LPs associated with dense graphs involve large constraint matrices, resulting in significantly longer solution times. Consequently, methods that reduce LP operations may offer a competitive advantage over ParDS in the realm of dense graphs.

In UDG, the performance ranking is as follows: \( \text{BIBLP} > \text{ParDS} > \text{BIBCO} > \text{BIBLP-IF} > \text{BIBCO-IF} \). In this dataset, while BIBLP outperforms ParDS, both algorithms demonstrate relatively short overall runtimes. Our algorithm employs more sophisticated strategies, which introduce additional computational overhead; however, the benefits gained from these strategies are limited in this context. This limitation is likely due to the nature of the graphs, which are relatively easy to solve and provide minimal opportunities for further optimization. Consequently, the increased complexity of our approach results in longer runtimes without significant improvements in performance.

\begin{table}[h]
\centering

\label{table2}
\footnotesize
\resizebox{\columnwidth}{!}{\begin{tabular}{c|c|ccc|ccc}
\midrule
      \multicolumn{2}{c}{\textbf{Dataset: T1}} &\multicolumn{3}{c}{Time} &\multicolumn{3}{c}{s\_num}\\
\midrule
Graph  & OPT&ParDS& ParDS-LP& BIBLP &ParDS& ParDS-LP& BIBLP \\
\midrule

 V300E750\_0
 & 61&4474*& \textbf{5086}& 18000
&180646*& \textbf{184889}&failed\\
  V300E750\_1
 & 59
&243*& \textbf{350}& 5814
&8130*& \textbf{8225}&346749
\\
   V300E750\_2
 & 60&940*& \textbf{1263}& 18000
&38065*& \textbf{38335}&failed\\
    V300E750\_3
 & 57
&61*& \textbf{94}& 352
&3016*& \textbf{3023}&21160
\\
     V300E750\_4
 & 60
&1181*& \textbf{1569}& 13217
&42977*& \textbf{43162}&929033
\\
      V300E750\_5
 & 59
&1676*& \textbf{1919}& 11830
&61870*& \textbf{61938}&739481
\\
       V300E750\_6
 & 60
&968*& \textbf{1325}& 8942
&43162*& \textbf{43437}&586353
\\ 
       V300E750\_7
 & 60
&57*& \textbf{83}& 618
&2481*& \textbf{2584}&42810
\\
        V300E750\_8
 & 60&3000*& \textbf{3370}& 18000
&111471*& \textbf{112131}&failed\\
         V300E750\_9
 & 60
&99*& \textbf{140}& 788
&4212*& \textbf{4246}&58125
\\
          V250E750\_0
 & 45&3024*& \textbf{3471}& 18000
&152478*& \textbf{162559}&failed\\
           V250E750\_1
 & 44
&294*& \textbf{482}& 3804
&12356*& \textbf{13815}&508001
\\
           V250E750\_2
 & 43
&372*& 601& \textbf{409}
&18895*& \textbf{20789}&47351
\\
           V250E750\_3
 & 44
&553*& \textbf{860}& 2574
&25008*& \textbf{26717}&265277
\\
           V250E750\_4
 & 43
&904*& \textbf{1089}& 1769
&44531*& \textbf{52036}&193535
\\
           V250E750\_5
 & 44
&3908*& 4111& \textbf{3698}
&172105*& \textbf{204993}&386787
\\
           V250E750\_6
 & 43
&270*& 488& \textbf{246}
&13412*& \textbf{15025}&32436
\\
           V250E750\_7
 & 45
&275*& \textbf{348}& 3392
&16642*& \textbf{19369}&515650
\\
           V250E750\_8
 & 45
&2247*& \textbf{2718}& 5880
&115072*& \textbf{120028}&715123
\\
           V250E750\_9
 & 44
&188*& \textbf{334}& 910
&10290*& \textbf{12314}&115700
\\

\midrule
           
\end{tabular}}
\caption{Experimental results: Performance comparison of ParDS-LP, BIBLP, and ParDS} \label{table2}
\end{table}

\subsection{Evaluating the Reduction Rules and Lower Bound}
As previously discussed, the ParDS and BIBLP algorithms are the first two top performers on the benchmark dataset. Both use a hybrid framework that integrates reduction rules with an LP approach. While BIBLP employs a conventional LP solver, ParDS utilizes an enhanced LP technique. To assess the impact of our reduction rules and enhanced LP, we developed a variant of ParDS called ParDS-LP, which replaces the enhanced LP component with the conventional LP method used in BIBLP.

We conducted a series of experiments to evaluate and compare the performance of ParDS-LP against both BIBLP and ParDS. The comparison with BIBLP focuses on the efficacy of our reduction rules, while the comparison with ParDS assesses the contribution of our enhanced LP technique. Performance was measured based on running time and the number of search nodes, quantified as vertex selection counts. To ensure meaningful results, we selected 20 graph instances that featured longer solution times and denser sampling, thereby emphasizing each algorithm's ability to reduce the search space.

Table \ref{table2} summarizes the results, where `OPT', `time', and `s\_num' denote the solution size, the running time, and vertex selection counts, respectively. For each pairwise comparison, the better result between ParDS-LP and BIBLP is shown in bold, while the better result between ParDS and ParDS-LP is marked with an asterisk. The data indicate that ParDS-LP consistently outperforms BIBLP, achieving faster runtimes in 17 out of 20 instances and lower vertex selection counts in all instances. In contrast, BIBLP achieved the best runtime in only two cases and never produced the lowest vertex selection count. These results clearly demonstrate the effectiveness of our reduction rules. In the comparison between ParDS and ParDS-LP, ParDS shows superior performance across all instances, both in terms of runtime and vertex selection counts. This validates the role of our enhanced LP in improving the overall efficiency of ParDS.

\section{Conclusion} \label{sec-7}
%We present ParDS, an exact branch-and-bound algorithm designed for the MDS problem, which incorporates two significant innovations. First, it utilizes an advanced linear programming technique to guarantee a tighter lower bound. Second, it introduces novel reduction rules that dynamically simplify problem instances during the solving process. ParDS has successfully solved 12 previously unresolved instances within 5 hours, achieving speedups in 65\% of graph categories and setting new benchmarks for exact MDS optimization. Future research will aim to extend the linear programming framework to accommodate weighted and dynamic network applications.

We introduce ParDS, an exact branch-and-bound algorithm designed for MDS. This algorithm features two key innovations: an advanced LP approach that provides a tighter lower bound and a set of novel reduction rules for simplifying instances. ParDS has successfully resolved 16 previously unsolved instances within five hours and achieved the fastest runtime across 70\% of graph categories. Future research will focus on expanding the LP framework to support weighted and dynamic network applications.

%\section{Acknowledgments}

%\bibliographystyle{named}

\bibliography{ref}

% Check whether the conference requires a reproducibility checklist to be included in the paper.
% If so, you can uncomment the following line and ajust the path to include it.
% \input{../../ReproducibilityChecklist/LaTeX/ReproducibilityChecklist.tex}
% \input{../../ReproducibilityChecklist/LaTeX/ReproducibilityChecklist.tex}
\end{document}